\newtheorem{theorem}{Theorem}[section]
\newtheorem{proposition}[theorem]{Proposition}
\newtheorem{remark}[theorem]{Remark}
\newcommand{\cc}{\mathbb{C}}
\newcommand{\R}{\mathbb{R}}
\newcommand{\C}{\mathbb{C}}
\newcommand{\nn}{\nonumber}
\newcommand{\Z}{\mathbb{Z}}
\providecommand{\norm}[1]{\left\lVert#1\right\rVert}
\title{\bf Quantum harmonic oscillator  with superoscillating initial datum}
\author{R.V. Buniy\footnote{Schmid College of Science and Technology, Chapman University, Orange, CA 92866, USA},\ F. Colombo\footnote{Politecnico di
Milano, Dipartimento di Matematica, Via E. Bonardi 9, 20133 Milano, Italy},
\ I. Sabadini$^\dagger$,\ D.C. Struppa$^*$}
\begin{document}
\maketitle

\begin{abstract}
In this paper we study the evolution of superoscillating initial data for the quantum driven harmonic oscillator.
Our main result shows that superoscillations are amplified by the harmonic potential and that the analytic solution develops a singularity in finite time.
We also show that for a large class of solutions of the Schr\"odinger equation, superoscillating behavior at any given time implies superoscillating behavior at any other time.
\end{abstract}

\section{Introduction}

Superoscillating phenomena have been known in physics for a long time~\cite{aav,abook,av,berry2,b1,b4} and recently they have also been studied from the mathematical point of view~\cite{acsst4,acsst1,acsst2,acsst3,acsstYa80,acsst6,acsst7,acsst5}.
An interesting question concerning superoscillating functions in quantum mechanics is to ask whether the superoscillatory behavior persists when these functions evolve in time according to the Schr\"{o}dinger equation.
The prototype of a sequence of superoscillating functions that is of interest for this problem is $\{F_n(x)\}_{n=1}^\infty$, where
\begin{align*}
  &F_n(x)=\left[\cos{\left(\frac{px}{n\hbar}\right)}+ia\sin{\left(\frac{px}{n\hbar}\right)}\right]^n=\sum_{k=0}^n C_k(n,a)\exp{\left[\frac{ipx}{\hbar}\left(1-\frac{2k}{n}\right)\right]},\\
  &C_k(n,a)={n\choose k}\left(\frac{1+a}{2}\right)^{n-k}\left(\frac{1-a}{2}\right)^k,
\end{align*}
${n\choose k}$ is the binomial coefficient, $p\in\mathbb R$, $p\not=0$ and $a>1$.
Taking the limit $n\to\infty$, we find that $F_n(x)\to e^{iapx/\hbar}$ pointwise on $\mathbb{R}$ but uniformly only on compact sets in $\mathbb{R}$.
The lack of uniform convergence is responsible for many subtle properties of superoscillations.

It has been shown that the Cauchy problem (the initial value problem for the time-dependent Schr\"{o}dinger equation) for a quantum-mechanical free particle in one spacial dimension,
\begin{align*}
  \left[i\hbar\frac{\partial}{\partial t}+\frac{\hbar^2}{2m}\frac{\partial^2}{\partial x^2}\right]\psi_n(t,x)=0,\quad \psi_n(0,x)=F_n(x),
\end{align*}
has the solution
\begin{align*}
  \psi_n(t,x)=\sum_{k=0}^n C_k(n,a)\exp{\left[\frac{ipx}{\hbar}\left(1-\frac{2k}{n}\right)-\frac{ip^2 t}{2m\hbar}\left(1-\frac{2k}{n}\right)^2\right]}
  \label{}
\end{align*}
for every $(t,x)\in\R\times\R$, and moreover
\begin{align*}
  \psi(t,x)=\lim_{n\to\infty} \psi_n(t,x)=\exp{\left(\frac{iapx}{\hbar}-\frac{ia^2 p^2 t}{2m\hbar}\right)},
\end{align*}
where the convergence is uniform on compact sets.
This means that the superoscillatory phenomenon for the functions $\{\psi_n(t,x)\}_{n=1}^\infty$ persists for all times $t\in [-T,T]$ for any $T>0$.
A crucial fact in the proof of this result (see \cite{acsst3}) is that $\psi_n(t,x)$
can be written as
\begin{align*}
  &\psi_n(t,x)=U\left(t,\frac{d}{dx}\right)F_n(x),\\
  &U\left(t,\frac{d}{dx}\right):=\sum_{l=0}^\infty\frac{1}{l!}\left(\frac{i\hbar t}{2m}\right)^l \left(\dfrac{d}{dx}\right)^{2l},
\end{align*}
and with suitable mathematical techniques~\cite{bs,ehrenpreis,taylor} we can show that the operator $U\left(t,\frac{d}{dz}\right)$ acts continuously on a class of holomorphic functions with certain growth conditions that contains the analytic extension $\{F_n(z)\}$ of the sequence $\{F_n(x)\}$.
Thus, by restricting both to $\mathbb R$, we have
\begin{align*}
  \lim_{n\to\infty}U\left(t,\frac{d}{dx}\right)F_n(x)=U\left(t,\frac{d}{dx}\right)\lim_{n\to\infty}F_n(x)\nn\\ =U\left(t,\frac{d}{dx}\right)e^{iapx/\hbar}=\exp{\left(\frac{iapx}{\hbar}-\frac{ia^2 p^2 t}{2m\hbar}\right)}.
\end{align*}

The recently developed~\cite{acsst6, acsst7} mathematical strategy to generate superoscillatory functions consists in explicitly solving the Cauchy problem for suitable convolution equations with superoscillating initial datum.
This works well in the case when explicit solutions for Green's functions and propagators are known.
Here we solve the Cauchy problem for the quantum driven harmonic oscillator and find that the superoscillations are amplified by the potential and that the analytic solution develops a singularity in finite time.
Moreover, even for $a\in (0,1)$ the harmonic oscillator displays a superoscillatory phenomenon since the solution contains the term that increases arbitrarily with $t$ (up to the time of singularity).
This phenomenon does not occur for the free particle.

The physical interest in the quantum driven harmonic oscillator lies in its connection to the semiclassical (WKB) approximation in the path integral formulation of quantum mechanics~\cite{feynman,fh,schulman}.
It turns out that the leading  behavior of the propagator for any quantum-mechanical system when $\hbar\to 0$ is obtained by a series expansion of the corresponding action functional, which, without loss of generality, is given by the action for the driven harmonic oscillator.
Thus our results about the superoscillatory behavior of such an oscillator directly apply to a much broader problem of establishing presence of superoscillations in any quantum system (at least in the semiclassical limit).

\section{Driven harmonic oscillator}

A nonrelativistic point particle at $(t,x)\in\R\times\R^d$ with the Hamiltonian $H(t,x)$ and the wave function $\psi(t,x)$ is governed by the time-dependent Schr\"{o}dinger equation
\begin{align}
  \left[i\hbar\frac{\partial}{\partial t}-H(t,x)\right]\psi(t,x)=0.
  \label{schrodinger_eq}
\end{align}
The Green's function $G(t,x,t',x')$ for \eqref{schrodinger_eq} satisfies
\begin{align*}
  \left[i\hbar\frac{\partial}{\partial t}-H(t,x)\right]G(t,x,t',x')=i\hbar\delta(t-t')\delta(x-x')\label{green_function}
\end{align*}
and it follows that
\begin{align*}
  G(t,x,t',x')=\theta(t-t')\tilde{G}(t,x,t',x'),
\end{align*}
where $\theta$ is the Heaviside function and the propagator $\tilde{G}(t,x,t',x')$ for \eqref{schrodinger_eq} satisfies
\begin{align*}
  \left[i\hbar\frac{\partial}{\partial t}-H(t,x)\right]\tilde{G}(t,x,t',x')=0,\quad \tilde{G}(t,x,t,x')=\delta(x-x').
\end{align*}
As a result, the propagator is the kernel for the solution of the Cauchy problem since
\begin{align*}
  \psi(t,x)=\int_{\R^d}\tilde{G}(t,x,t',x')\psi(t',x')dx'
\end{align*}
for any $t\ge t'$.

Explicit solutions for Green's functions and propagators are known only for a few physical systems.
One of them is described by the Hamiltonian
\begin{align*}
    H(t,x)=-\frac{\hbar^2}{2m}\nabla_x^2+\frac{1}{2}m\omega^2(t)\norm{x}^2-f(t)\cdot x,
\end{align*}
which represents a $d$-dimensional harmonic oscillator of mass $m$  and time-dependent frequency $\omega(t)$ under the influence of the external time-dependent force $f(t)$.
The corresponding propagator~\cite{feynman,fh,montroll,gy,schulman} is given by the following proposition.
\begin{proposition}\label{proposition_harmonic_oscillator}
The propagator for the Hamiltonain
\begin{align*}
    H(t,x)=-\frac{\hbar^2}{2m}\nabla_x^2+\frac{1}{2}m\omega^2(t)\norm{x}^2-f(t)\cdot x
\end{align*}
is
\begin{align*}
  \tilde{G}(t,x,t',x')=\left[\frac{m}{2\pi i\hbar g(t,t')}\right]^{d/2}\exp{\left[\frac{i}{\hbar}S(t,x,t',x')\right]},
\end{align*}
where
\begin{align*}
  S(t,x,t',x')=\int_{t'}^t\left[\frac{1}{2}m\norm{\frac{dy(s)}{ds}}^2-\frac{1}{2}m\omega^2(s) \norm{y(s)}^2+f(s)\cdot y(s)\right]ds,
\end{align*}
$y(s)$ is the solution of the boundary value problem
\begin{align*}
  \frac{d^2 y(s)}{ds^2}+\omega^2(s) y(s)=f(s),\quad y(t)=x,\quad y(t')=x',
\end{align*}
and $g(t,t')$ is the solution of the initial value problem
\begin{align*}
  \frac{\partial^2 g(t,t')}{\partial t^2}+\omega^2(t) g(t,t')=0,\quad g(t',t')=0,\quad \left.\frac{\partial g(t,t')}{\partial t}\right\rvert_{t=t'}=1.
\end{align*}

\end{proposition}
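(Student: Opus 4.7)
My strategy is to verify directly that the stated Gaussian ansatz solves the Schr\"odinger equation and satisfies the delta initial condition. Since the Lagrangian $L(s,y,\dot y)=\frac12 m\norm{\dot y}^2-\frac12 m\omega^2(s)\norm{y}^2+f(s)\cdot y$ is quadratic in $(y,\dot y)$ plus a linear forcing, Hamilton's principal function $S(t,x,t',x')$ is quadratic in the endpoints, so the WKB form $A(t,t')e^{iS/\hbar}$ is expected to be an exact solution and not merely a leading-order approximation.

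First I would solve the boundary value problem by writing $y=y_h+y_p$, where $y_h$ solves the homogeneous Jacobi equation with $y_h(t)=x$, $y_h(t')=x'$, and $y_p$ solves the inhomogeneous equation with zero boundary values. Using two fundamental solutions $g_1,g_2$ normalized by $g_1(t)=g_2(t')=1$, $g_1(t')=g_2(t)=0$, one identifies $g_1(s)=g(s,t')/g(t,t')$, so $g(t,t')$ is exactly the nonvanishing Wronskian datum controlling the invertibility of the boundary map. Integration by parts in the action, using the Euler--Lagrange equation, then yields the endpoint expressions $\nabla_x S=m\dot y(t)$ and $-\nabla_{x'}S=m\dot y(t')$; in particular $\partial^2 S/\partial x_i\partial x'_j=-(m/g(t,t'))\delta_{ij}$, matching the Van Vleck determinant encoded in the prefactor $[m/(2\pi i\hbar g(t,t'))]^{d/2}$.

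Second, I would insert $\tilde G=A e^{iS/\hbar}$ into the Schr\"odinger equation and separate by powers of $\hbar$. The $\hbar^{0}$ terms reduce to the Hamilton--Jacobi equation $\partial_t S+\frac{1}{2m}\norm{\nabla_x S}^2+\frac12 m\omega^2\norm{x}^2-f\cdot x=0$, which follows from the endpoint variation identity $\partial_t S=L(t,x,\dot y(t))-\dot y(t)\cdot\nabla_x S=-H(t,x,\nabla_x S)$. The $\hbar^{1}$ terms give the transport equation $2m\partial_t A+A\nabla_x^2 S=0$; computing $\nabla_x^2 S$ from $\nabla_x S=m\dot y(t)$ yields $\nabla_x^2 S=md\,\partial_t g(t,t')/g(t,t')$, which is precisely compensated by $A\propto g(t,t')^{-d/2}$. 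The would-be $\hbar^{2}$ correction vanishes because $A$ is spatially constant, so no quantum remainder survives and the ansatz is an exact solution.

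For the initial condition, as $t\to t'$ one has $g(t,t')=(t-t')+O((t-t')^2)$ from the initial value problem for $g$, the classical trajectory approaches a straight line, and $S$ is dominated by $\tfrac{m\norm{x-x'}^2}{2(t-t')}$; hence $\tilde G$ reduces to the free-particle oscillatory Gaussian kernel, which is well known to converge to $\delta(x-x')$ as a tempered distribution. The main technical obstacle is the careful bookkeeping of the endpoint derivatives of $S$ so that the Hamilton--Jacobi equation and the Van Vleck prefactor emerge consistently; this is essentially the classical statement that the semiclassical expansion terminates for quadratic Lagrangians, and it is what allows the explicit closed form above to be exact rather than merely asymptotic.
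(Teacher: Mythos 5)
Your verification argument is correct, but note that the paper does not actually prove Proposition \ref{proposition_harmonic_oscillator} at all: it quotes the propagator as a known result, citing the path-integral literature (Feynman, Feynman--Hibbs, Montroll, Gelfand--Yaglom, Schulman), where the standard derivation writes an arbitrary path as the classical path plus a fluctuation vanishing at the endpoints, uses the quadratic structure of the Lagrangian to decouple the action into $S(t,x,t',x')$ plus a fluctuation term, and evaluates the Gaussian fluctuation integral by the Gelfand--Yaglom method, which is exactly where the determinant factor $g(t,t')^{-d/2}$ comes from. Your route is genuinely different and more elementary: you check directly that the Van Vleck ansatz $A(t,t')e^{iS/\hbar}$ solves the Schr\"odinger equation, with the order-$\hbar^0$ terms giving the Hamilton--Jacobi equation, the order-$\hbar^1$ terms giving the transport equation $2m\,\partial_t A + A\nabla_x^2 S=0$ solved by $A\propto g(t,t')^{-d/2}$ via $\nabla_x^2 S = m d\,\partial_t g(t,t')/g(t,t')$, and the order-$\hbar^2$ term vanishing because $A$ is spatially constant; together with the small-time limit $g(t,t')\sim t-t'$, $S\sim m\norm{x-x'}^2/[2(t-t')]$ giving the $\delta(x-x')$ initial condition, this is a complete and self-contained proof that avoids any path-integral formalism, at the price of not explaining where the ansatz comes from. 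Two small points you should make explicit to match the statement's scope: the formula (and your verification) is valid only on time intervals where $g(t,t')\neq 0$, i.e.\ before the first conjugate point, which is also the regime the paper uses ($t\in[0,\pi/(2\omega))$ for constant $\omega$); and in the $\delta$-limit step you should remark that the remaining terms of $S$ beyond the free-particle quadratic one are $O(t-t')$ uniformly on compacts, so they do not affect the distributional limit.
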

The following four limiting cases of the above Hamiltonian and the corresponding propagator are of particular interest in physics:
\begin{itemize}
\item[(1)]
a free particle:
\[
\begin{split}
  &H(t,x)=-\frac{\hbar^2}{2m}\nabla_x^2,\\
  &g(t,t')=t-t',\\
  &S(t,x,t',x')=\frac{m\norm{x-x'}^2}{2(t-t')};
  \label{}
\end{split}
\]
\item[(2)]
a particle in a uniform field:
\[ \begin{split}
  &H(t,x)=-\frac{\hbar^2}{2m}\nabla_x^2-f\cdot x, \quad f=\textrm{const},\\
  &g(t,t')=t-t',\\
  &S(t,x,t',x')=\frac{m\norm{x-x'}^2}{2(t-t')}+\frac{1}{2}f(t-t')\cdot (x+x')-\frac{f^2(t-t')^3}{24m};
  \label{}
\end{split}  \]
\item[(3)]
a harmonic oscillator:
\[ \begin{split}
    &H(t,x)=-\frac{\hbar^2}{2m}\nabla_x^2+\frac{1}{2}m\omega^2 \norm{x}^2, \quad \omega=\textrm{const},\\
  &g(t,t')=\frac{\sin{\omega(t-t')}}{\omega},\\
  &S(t,x,t',x')=\frac{m\omega}{2\sin{\omega(t-t')}}\left[(\norm{x}^2+\norm{x}^{\prime 2})\cos{\omega(t-t')}-2xx'\right];
  \label{}
\end{split}  \]
\item[(4)]
a driven harmonic oscillator:
\[ \begin{split}
    &H(t,x)=-\frac{\hbar^2}{2m}\nabla_x^2+\frac{1}{2}m\omega^2 \norm{x}^2+f(t)\cdot x, \quad \omega=\textrm{const},\\
  &g(t,t')=\frac{\sin{\omega(t-t')}}{\omega},\\
  &S(t,x,t',x')=\frac{m\omega}{2\sin{\omega(t-t')}}\biggl[(\norm{x}^2+\norm{x}^{\prime 2})\cos{\omega(t-t')}-2x\cdot x'\nn\\
    &+2x\cdot I(t,t')+2x'\cdot I(t',t)-2J(t,t')\biggr],\\
  &I(t,t')=\frac{1}{m\omega}\int_{t'}^t f(s)\sin{\omega(s-t')}ds,\\
  &J(t,t')=\frac{1}{m^2\omega^2}\int_{t'}^t\int_{t'}^s f(s)\cdot f(s')\sin{\omega(t-s)}\sin{\omega(s'-t')}ds'ds.
  \label{}
\end{split}
\]
\end{itemize}

We proceed with the fourth example since the first three examples can be obtained as its appropriate limits.
We first prove a preliminary result that will be useful in the sequel.
\begin{proposition}\label{PropPsiA}
The solution of the Cauchy problem
\begin{align*}
  \left[i\hbar\frac{\partial}{\partial t}+\frac{\hbar^2}{2m}\nabla_x^2-\frac{1}{2}m\omega^2\norm{x}^2+f(t)\cdot x\right]\psi(t,x)=0, \quad \psi(0,x)=e^{iap\cdot x/\hbar}
  \label{cauchy_harmonic}
\end{align*}
is
\[
\begin{split}
  &\psi(t,x)=(\cos{\omega t})^{-d/2}\exp{\biggl\{\frac{im\omega}{2\hbar\sin{\omega t}}\biggl[-\frac{1}{\cos{\omega t}}\norm{x-p\frac{a\sin{\omega t}}{m\omega}-I(0,t)}^2}\nn
  \\
  &{+\norm{x}^2\cos{\omega t}+2x\cdot I(t,0)-2J(t,0)\biggr]\biggr\}}. \label{psiNoscA}\nn
  \end{split}
\]
\end{proposition}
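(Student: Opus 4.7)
The plan is to obtain $\psi(t,x)$ by applying the propagator formula from Proposition \ref{proposition_harmonic_oscillator} in the driven harmonic oscillator case (item (4)) to the initial datum $\psi(0,x')=e^{iap\cdot x'/\hbar}$, and then to evaluate the resulting (Fresnel-type) Gaussian integral in $x'$ explicitly. With $t'=0$ and $\omega$ constant one has $g(t,0)=\sin(\omega t)/\omega$, so
\[
  \psi(t,x)=\left[\frac{m\omega}{2\pi i\hbar\sin\omega t}\right]^{d/2}\int_{\R^d}\exp\!\left[\frac{i}{\hbar}\bigl(S(t,x,0,x')+ap\cdot x'\bigr)\right]dx',
\]
where $S(t,x,0,x')$ is the explicit quadratic-in-$x'$ action given in case (4). (The integral is oscillatory and is to be understood in the standard Fresnel sense, e.g.\ by regularising $\omega^2\mapsto\omega^2-i\varepsilon$ and letting $\varepsilon\to 0^+$ after the Gaussian evaluation, which is legitimate because the propagator formalism already gives $\tilde G(t,x,0,x')$ as a distribution acting on bounded smooth initial data.)

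Next I would isolate the $x'$-dependence. The coefficient of $\tfrac12\|x'\|^2$ is $A:=m\omega\cos\omega t/\sin\omega t$, and the coefficient of $x'$ (as a vector) is
\[
  B:=\frac{m\omega}{\sin\omega t}\bigl(-x+I(0,t)\bigr)+ap,
\]
while the remaining terms $\tfrac{m\omega}{2\sin\omega t}\bigl[\|x\|^2\cos\omega t+2x\cdot I(t,0)-2J(t,0)\bigr]$ are independent of $x'$. Completing the square yields $\tfrac12 A\|x'+B/A\|^2-\|B\|^2/(2A)$, and the Fresnel integral of the first term evaluates to $(2\pi i\hbar/A)^{d/2}$ with the usual branch. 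Multiplying by the prefactor $[m\omega/(2\pi i\hbar\sin\omega t)]^{d/2}$ collapses, after cancellation, to exactly $(\cos\omega t)^{-d/2}$, matching the claimed normalisation.

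It then remains to simplify the exponent. A direct calculation gives
\[
  -\frac{\|B\|^2}{2A}=-\frac{m\omega}{2\sin\omega t\,\cos\omega t}\left\|x-\frac{ap\sin\omega t}{m\omega}-I(0,t)\right\|^2,
\]
so adding the $x'$-independent terms back in produces precisely the bracketed expression multiplied by $im\omega/(2\hbar\sin\omega t)$ that appears in the statement. Finally, one should verify that at $t=0$ this formula reduces to $e^{iap\cdot x/\hbar}$: as $t\to 0^+$, $\sin\omega t\to 0$, $I(0,t),I(t,0),J(t,0)\to 0$, and a short asymptotic expansion of the quadratic term inside the brackets recovers the initial datum, confirming the Cauchy condition.

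The main obstacle is purely bookkeeping: the careful tracking of signs and of the branches of the $d/2$-th roots in the Fresnel integral, together with the verification that the various occurrences of $I(0,t)$ vs.\ $I(t,0)$ (which differ by the orientation of the integration interval and the shift in the sine factor) assemble into exactly the claimed combination. No genuinely analytical difficulty arises because the integrand is Gaussian in $x'$ once $S$ is expanded; the Schr\"odinger equation itself need not be checked directly, since Proposition~\ref{proposition_harmonic_oscillator} already guarantees that convolution with $\tilde G$ solves the Cauchy problem.
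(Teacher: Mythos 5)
Your proposal is correct and follows exactly the route the paper takes: the authors' (very terse) proof likewise applies the driven-oscillator propagator of Proposition~\ref{proposition_harmonic_oscillator}, completes the square in $x'$, and evaluates the resulting Fresnel integral via the regularization $\int_{\R^d}e^{i\alpha\norm{x'}^2}dx'=(i\pi/\alpha)^{d/2}$. Your computation of $A$, $B$, the cancellation of the prefactor to $(\cos\omega t)^{-d/2}$, and the simplification of $-\norm{B}^2/(2A)$ fills in the details consistently with the stated formula, so no gap remains.
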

\begin{proof}
  The proof follows by using the propagator, completing the square, and integrating over $x'$ with the help of the regularized integral
  \begin{align*}
    \int_{\R^d}e^{i\alpha \norm{x'}^2}\, dx'=\lim_{\beta\to 0^+}\int_{\R^d}e^{-(\beta-i\alpha)\norm{x'}^2}\, dx'=\Big(\frac{i\pi }{\alpha}\Big)^{d/2}.
    \label{}
  \end{align*}
\end{proof}
As a result we have the following theorem.
\begin{theorem}\label{TH23}
The solution of the Cauchy problem
\begin{align}
  &\left[i\hbar\frac{\partial}{\partial t}+\frac{\hbar^2}{2m}\nabla_x^2-\frac{1}{2}m\omega^2\norm{x}^2+f(t)\cdot x\right]\psi_n(t,x)=0,\label{cauchy_equation}\\
  &\psi_n(0,x)=\sum_{k=0}^n C_k(n,a)\exp{\left[\frac{ip\cdot x}{\hbar}\left(1-\frac{2k}{n}\right)\right]}
  \label{cauchy_initial}
\end{align}
is
\begin{align}
  \psi_n(t,x)&=(\cos{\omega t})^{-d/2}\exp{\biggl\{\frac{im\omega}{2\hbar\sin{\omega t}\cos{\omega t}}\Bigl[-\norm{x}^2\sin^2{\omega t}+2x\cdot I(t,0)\cos{\omega t}}\nn
  \\
  &{-2J(t,0)\cos{\omega t}+2x\cdot I(0,t)-\norm{I(0,t)}^2\Bigr]\biggr\}}\nn
  \\
  &
  \times \sum_{l=0}^\infty\frac{1}{l!}\left(\frac{i\hbar}{2m\omega}\sin{\omega t}\cos{\omega t}\right)^l(\nabla_x^2)^l\psi_n\left(0,\frac{x-I(0,t)}{\cos{\omega t}}\right).
  \label{cauchy_solution}
\end{align}
Moreover, if we set $\psi(t,x)=\lim_{n\to \infty}\psi_n(t,x)$, then
\begin{align}
    &\psi(t,x)=(\cos{\omega t})^{-d/2}\exp{\biggl\{\frac{im\omega}{2\hbar\sin{\omega t}}\biggl[-\frac{1}{\cos{\omega t}}\norm{x-p\frac{a\sin{\omega t}}{m\omega}-I(0,t)}^2}\nn\\ &{+\norm{x}^2\cos{\omega t}+2x\cdot I(t,0)-2J(t,0)\biggr]\biggr\}}.\label{cauchy_solution_limit}
\end{align}

\end{theorem}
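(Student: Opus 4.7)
The plan is to reduce the problem to Proposition \ref{PropPsiA} by linearity of the Schr\"odinger equation, rewrite the finite superposition so obtained as the action of the infinite-order differential operator
\[
U_t:=\sum_{l=0}^\infty\frac{1}{l!}\Bigl(\tfrac{i\hbar}{2m\omega}\sin\omega t\cos\omega t\Bigr)^l(\nabla_x^2)^l
\]
on $\psi_n(0,\cdot)$ evaluated at $\xi(x,t):=(x-I(0,t))/\cos\omega t$, and finally take $n\to\infty$ by the same continuity argument that the introduction recalls for the free particle. More precisely, with $a_k:=1-2k/n$ the initial condition reads $\psi_n(0,x)=\sum_{k=0}^n C_k(n,a)e^{ia_k p\cdot x/\hbar}$, so linearity of \eqref{cauchy_equation} together with Proposition \ref{PropPsiA} (applied with $a$ replaced by each $a_k$) immediately yields $\psi_n(t,x)=\sum_{k=0}^n C_k(n,a)\,\Phi_{a_k}(t,x)$, where $\Phi_a$ denotes the explicit formula of Proposition \ref{PropPsiA}.

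\textbf{Algebraic step.} The exponential $e^{ia_k p\cdot\xi(x,t)/\hbar}$ is an eigenfunction of $\nabla_x^2$ with eigenvalue $-a_k^2\norm{p}^2/(\hbar^2\cos^2\omega t)$, hence
\[
U_t\,e^{ia_k p\cdot\xi(x,t)/\hbar}=\exp\!\Bigl[-\tfrac{ia_k^2\norm{p}^2\sin\omega t}{2m\omega\hbar\cos\omega t}\Bigr]\,e^{ia_k p\cdot\xi(x,t)/\hbar}.
\]
Expanding $\norm{x-pa_k\sin\omega t/(m\omega)-I(0,t)}^2$ in the exponent of $\Phi_{a_k}$ and separating $a_k$-dependent from $a_k$-independent terms, the former is precisely $\log U_t e^{ia_k p\cdot\xi/\hbar}$ while the latter is exactly the argument of the exponential prefactor $e^{P(t,x)}$ displayed in \eqref{cauchy_solution}. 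Consequently $\Phi_{a_k}(t,x)=(\cos\omega t)^{-d/2}e^{P(t,x)}\,U_t\,e^{ia_k p\cdot\xi(x,t)/\hbar}$; summing over $k$ against $C_k(n,a)$ and using linearity of $U_t$ produces \eqref{cauchy_solution}.

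\textbf{Passage to the limit and main obstacle.} The entire extensions $\psi_n(0,z)$ converge to $e^{iap\cdot z/\hbar}$ in the Fr\'echet space of entire functions of exponential type on $\mathbb C^d$ on which $U_t$ acts continuously, by the analytic apparatus of \cite{bs,ehrenpreis,taylor} recalled in the introduction. Passing the limit through $U_t$ and then reassembling the square (the reverse of the manipulation of the algebraic step, now with the single value $a$) produces \eqref{cauchy_solution_limit}. The only genuinely nontrivial part of the argument is the algebraic bookkeeping just described: correctly tracking the quadratic, linear, and constant (in $x$) contributions to the exponent of $\Phi_{a_k}$ in order to match, term by term, the $a_k$-dependent piece with $U_t e^{ia_k p\cdot\xi/\hbar}$ and the $a_k$-independent piece with the prefactor of \eqref{cauchy_solution}; the linearity reduction and the continuity of $U_t$ are entirely standard given what is already in the paper.
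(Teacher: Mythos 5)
Your proposal is correct and follows essentially the same route as the paper's proof: linearity together with Proposition \ref{PropPsiA}, the same completing/expanding-the-square and eigenvalue identity that factors $\psi_n(t,x)$ into the prefactor times $U_t$ applied to $\psi_n(0,(x-I(0,t))/\cos\omega t)$, and then continuity of $U_t$ on a space of entire functions to exchange the limit $n\to\infty$ with the operator. The only (minor) divergence is in the functional-analytic setting: since the symbol of $U_t$ grows like $e^{c\norm{z}^2}$, the paper proves continuity on the Fr\'echet space $A_{2,0}$ of entire functions of order at most $2$ and minimal type (an extension of Theorem 3.3 of \cite{acsst3}), which contains the exponential-type extensions of the $\psi_n(0,\cdot)$, rather than on the space of exponential-type functions you invoke (which is an inductive limit, not Fr\'echet), and in either formulation the limit exchange ultimately rests on the convergence of the superoscillating sequence in the relevant topology as established in \cite{acsst1,acsst6}, not merely on uniform convergence on compact sets.
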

\begin{proof} To prove that the solution of the Cauchy problem \eqref{cauchy_equation} and \eqref{cauchy_initial} is given by \eqref{cauchy_solution}, we
  observe that the initial datum $\psi_n(0,x)$ is a linear combination of the exponentials $e^{iap\cdot x/\hbar}$ with various values of $a$, and so by Proposition \ref{PropPsiA} we get \eqref{cauchy_solution}. To compute  $\lim_{n\to \infty}\psi_n(t,x)$ we proceed by steps.
\\
\\
Step 1. We observe that \eqref{cauchy_solution} can be written as
\begin{align*}
  \psi_n(t,x)&=(\cos{\omega t})^{-d/2}\exp{\biggl\{\frac{im\omega}{2\hbar\sin{\omega t}\cos{\omega t}}\Bigl[-\norm{x}^2\sin^2{\omega t}+2x\cdot I(t,0)\cos{\omega t}}\nn
  \\
  &{-2J(t,0)\cos{\omega t}+2x\cdot I(0,t)-\norm{I(0,t)}^2\Bigr]\biggr\}}\nn
  \\
  &
  \times \sum_{k=0}^n C_k(n,a)\exp{\left[\frac{i(1-2k/n)p\cdot(x-I(0,t))}{\hbar\cos{\omega t}}-\frac{i(1-2k/n)^2\norm{p}^2\tan{\omega t}}{2\hbar m\omega}\right]},
  \label{}
\end{align*}
which after using
\begin{align*}
  \exp{\left[\frac{i(1-2k/n)p\cdot(x-I(0,t))}{\hbar\cos{\omega t}}-\frac{i(1-2k/n)^2\norm{p}^2\tan{\omega t}}{2\hbar m\omega}\right]}\nn\\ =\exp{\left[\frac{i(1-2k/n)p\cdot(x-I(0,t))}{\hbar\cos{\omega t}}\right]}\sum_{l=0}^\infty\frac{1}{l!}\left[-\frac{i(1-2k/n)^2\norm{p}^2\tan{\omega t}}{2\hbar m\omega}\right]^l\nn\\ =\sum_{l=0}^\infty\frac{1}{l!}\left(\frac{i\hbar}{2m\omega}\sin{\omega t}\cos{\omega t}\right)^l(\nabla_x^2)^l\exp{\left[\frac{i(1-2k/n)p\cdot(x-I(0,t))}{\hbar\cos{\omega t}}\right]}
  \label{}
\end{align*}
gives \eqref{cauchy_solution}.

To proceed with the limit $n\to\infty$, we need to study the continuity of the operator
\begin{align}
  U(t,\nabla_x)=\sum_{l=0}^\infty\frac{1}{l!}\left(\frac{i\hbar}{2m\omega}\sin{\omega t}\cos{\omega t}\right)^l(\nabla_x^2)^l.
  \label{operator}
\end{align}
If $U$ is continuous on a function space that contains the functions $\{\psi_n\}_{n=1}^\infty$, we can write
\begin{align*}
  \psi(t,x)=\lim_{n\to\infty}\psi_n(t,x)=
  (\cos{\omega t})^{-d/2}\exp{\biggl\{\frac{im\omega}{2\hbar\sin{\omega t}\cos{\omega t}}\Bigl[-\norm{x}^2\sin^2{\omega t}}\nn\\ {+2x\cdot I(t,0)\cos{\omega t}}
    {-2J(t,0)\cos{\omega t}+2x\cdot I(0,t)-\norm{I(0,t)}^2\Bigr]\biggr\}}\nn\\
    \times U(t,x)\lim_{n\to\infty}\psi_n\left(0,\frac{x-I(0,t)}{\cos{\omega t}}\right)\nn\\ =
  (\cos{\omega t})^{-d/2}\exp{\biggl\{\frac{im\omega}{2\hbar\sin{\omega t}\cos{\omega t}}\Bigl[-\norm{x}^2\sin^2{\omega t}+2x\cdot I(t,0)\cos{\omega t}}\nn\\
    {-2J(t,0)\cos{\omega t}+2x\cdot I(0,t)-\norm{I(0,t)}^2\Bigr]\biggr\}} U(t,x)\psi\left(0,\frac{x-I(0,t)}{\cos{\omega t}}\right)
  \label{}
\end{align*}
since, from \cite{acsst1}, $\psi_n(0,(x-I(0,t))/\cos{\omega t})$ converges uniformly to $\psi(0,(x-I(0,t))/\cos{\omega t})$ for any $x\in[-M_1,M_1]\times\dotsb\times[-M_d,M_d]$, with $M_1,\dotsc,M_d$ any positive real numbers, for every fixed $t$ in $[0,\pi/(2\omega))$.
\\
\\
Step 2.
Now we show that $U$ is continuous on a function space that contains the functions $\{\psi_n\}_{n=1}^\infty$.
\\
We recall that if $X$ is an Analytically Uniform space (AU-space) \cite{bs,ehrenpreis,taylor},
then convolutors on $X$ can be defined in a standard way as follows. Let $\mathcal{F} X'$ be the space of the Fourier (or Fourier-Borel) transform of the elements of the dual of $X$. The definition of AU-space implies that $\mathcal{F} X'$ is a space of entire functions
which satisfy suitable growth conditions.
\\
Let $g$ be an entire function which, by multiplication, defines a continuous map from $\mathcal{F} X'$ to itself. Then, a convolutor on $X$ is the continuous operator on $X$ defined as the adjoint of the map that associates to $\varphi\in X'$
the element $\mathcal{F}^{-1}(g(\mathcal{F}(\varphi)))$.
Within this framework, if $X=\mathcal{O}(\mathbb{C}^d)$ is the space of entire functions on $\C^d$, then $X'$ is the space of analytic functionals and $\mathcal{F} X'$ is (topologically isomorphic to) the space ${\rm Exp}(\cc^d)$ of entire functions with exponential growth; see \cite{taylor}.
We replace $\nabla_x$ by the variable $z\in\cc^d$ in the operator \eqref{operator},
whose symbol
\begin{align}
\hat{U}(t,z):=\sum_{l=0}^\infty\frac{1}{l!}\Big(\frac{i\hbar}{2m\omega}\sin{\omega t}\cos{\omega t}\Big)^l\norm{z}^{2l}=\exp{\left(\frac{i\hbar\norm{z}^2}{2m\omega}\sin{\omega t}\cos{\omega t}\right)}
  \label{}
\end{align}
is continuous thanks to a simple extension of Theorem 3.3 in \cite{acsst3}:
{\it
for any value of $t$, the operator $U(t,\nabla_x)$ acts continuously on the space
$$
A_{2,0}:=\left\{f\in \mathcal{O}(\mathbb{C}^d)\ | \ \forall \varepsilon >0 \  \exists \ A_\varepsilon>0 \ | \ |f(z)|\leq A_\varepsilon e^{\varepsilon\norm{z}^2}    \right\}
$$
of entire functions of order less or equal $2$ and of minimal type.}

The functions $\{\psi_n\}_{n=1}^\infty$ extend to entire functions of order less than or equal 1 and of finite (i.e. exponential) type, and this space is clearly contained in $A_{2,0}$.
 \\
 \\
 Step 3. As the operator ${U}(t,\nabla_x)$ is continuous, we can write
\begin{align*}
  \psi(t,x)=
  (\cos{\omega t})^{-d/2}\exp{\biggl\{\frac{im\omega}{2\hbar\sin{\omega t}\cos{\omega t}}\Bigl[-\norm{x}^2\sin^2{\omega t}+2x\cdot I(t,0)\cos{\omega t}}\nn\\
  {-2J(t,0)\cos{\omega t}+2x\cdot I(0,t)-\norm{I(0,t)}^2\Bigr]\biggr\}}\nn\\ \times\sum_{l=0}^\infty\frac{1}{l!}\left(\frac{i\hbar}{2m\omega}\sin{\omega t}\cos{\omega t}\right)^l(\nabla_x^2)^l\exp\left(\frac{iap\cdot(x-I(0,t))}{\hbar\cos{\omega t}}\right),
  \label{}
\end{align*}
which gives \eqref{cauchy_solution_limit}.
\end{proof}

The amplitude and frequency of $\psi(t,x)$ simultaneously diverge for $t=(2k+1)\pi/(2\omega)$, $k\in\Z$, while
the frequency of oscillations of $\psi(t,x)$ in $x$ increases with $\norm{x}$ without bound for any $a>1$.
This is a consequence of the Hamiltonian of the harmonic oscillator generating the time evolution of a wave function with infinite norm.

Slightly modifying its proof, we can further generalize Theorem~\ref{TH23} by using a simple extension of the result proved in \cite{acsst1,acsst6}.
\begin{proposition}\label{sequences}
Let $a>1$, $p\in\R^d$, $x\in\R^d$ and $C_k(n,a)={n\choose k}\left(\frac{1+a}{2}\right)^{n-k}\left(\frac{1-a}{2}\right)^k$.
Then we have the following facts:
\begin{itemize}
\item[(a)]
The sequence $F_n(x)=\sum_{k=0}^n C_k(n,a)\exp{\left[\frac{ip\cdot x}{\hbar}\left(1-\frac{2k}{n}\right)\right]}$ converges pointwise to $F(x)=e^{iap\cdot x/\hbar}$ for all $x\in\R^d$ and the convergence is  uniform on the compact sets in $\R^d$.
\item[(b)]
If $q$ is an even number, then the sequence $Y_n(x)=\sum_{k=0}^nC_k(n,a) \exp{\left[\frac{ip\cdot x}{\hbar}(-i)^q\left(1-\frac{2k}{n}\right)^q\right]}$ converges pointwise to $Y(x)=e^{ip\cdot x(-ia)^q/\hbar}$ for all $x\in\R^d$ and the convergence is uniform on the compact sets in $\R^d$.
\item[(c)] If $q$ is an odd number, then the sequence $ Z_n(x)=\sum_{k=0}^nC_k(n,a)\exp{\left[\frac{p\cdot x}{\hbar}(-i)^q\left(1-\frac{2k}{n}\right)^q\right]} $ converges pointwise to  $Z(x)=e^{p\cdot x(-ia)^q/\hbar}$ for all $x\in\R^d$ and the convergence is uniform on the compact sets in $\R^d$.
\end{itemize}
\end{proposition}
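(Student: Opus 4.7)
I would reduce all three parts of Proposition~\ref{sequences} to a single auxiliary claim about the ``moments''
$$
M_{j,n}:=\sum_{k=0}^n C_k(n,a)\Big(1-\tfrac{2k}{n}\Big)^{\!j},
$$
namely that $M_{j,n}\to a^j$ as $n\to\infty$ and that $|M_{j,n}|\le C\sqrt{j}\,|a|^{j}$ holds uniformly in $n$. Both follow from the generating function
$$
\phi_n(s):=\sum_{k=0}^n C_k(n,a)\,e^{s(1-2k/n)}=\big[\cosh(s/n)+a\sinh(s/n)\big]^{n},
$$
which is just the binomial theorem in disguise (factor out $e^s$ and use $C_k(n,a)=\binom{n}{k}(\tfrac{1+a}{2})^{n-k}(\tfrac{1-a}{2})^k$). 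On any disc $|s|\le R$ one has $\cosh(s/n)+a\sinh(s/n)=1+as/n+O(R^{2}/n^{2})$, so $|\phi_n(s)|\le K_R$ uniformly in $n$ and $\phi_n(s)\to e^{as}$ uniformly on compact subsets of $\C$. Cauchy's integral formula then yields $|M_{j,n}|=|\phi_n^{(j)}(0)|\le j!\,K_R/R^{j}$, and optimising in $R\sim j/|a|$ (via Stirling) delivers the asserted uniform moment bound.

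Part (a) is then immediate: $F_n(x)=\phi_n(ip\cdot x/\hbar)$, and the scalar $ip\cdot x/\hbar$ stays in a compact subset of $\C$ when $x$ is confined to a compact subset of $\R^{d}$, so $F_n(x)\to e^{iap\cdot x/\hbar}$ uniformly on compacts.

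For parts (b) and (c), set $y:=p\cdot x/\hbar$, $c_q:=(-i)^{q}$, and put $h:=iyc_q$ in case (b) and $h:=yc_q$ in case (c); the parity hypothesis on $q$ makes $h$ purely imaginary in either case, but what I really need is only that $|h|$ is bounded on compacts in $x$. Expanding the exponential as
$$
\exp\!\Big[h\Big(1-\tfrac{2k}{n}\Big)^{\!q}\Big]=\sum_{l=0}^{\infty}\frac{h^{l}}{l!}\Big(1-\tfrac{2k}{n}\Big)^{\!ql}
$$
and interchanging the two summations produces $Y_n(x)$, respectively $Z_n(x)$, as $\sum_{l\ge 0}h^{l}M_{ql,n}/l!$. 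The uniform moment bound dominates the $l$-series by $C\sum_l\sqrt{ql}\,(|h|\,|a|^{q})^{l}/l!$, which is finite and independent of $n$, so dominated convergence in $l$ together with $M_{ql,n}\to a^{ql}$ gives the limit $\sum_l h^{l}a^{ql}/l!=e^{ha^{q}}$. Rewriting $ha^q$ as $iy(-ia)^q$ in case (b) and $y(-ia)^q$ in case (c) recovers the stated limits $Y(x)$ and $Z(x)$, and the whole argument is locally uniform in $x$.

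\textbf{Main obstacle.} The delicate ingredient is the uniform moment bound. The naive combinatorial estimate $\sum_k|C_k(n,a)|=a^n$ blows up with $n$ and rules out any direct exchange of $\lim_n$ with $\sum_l$ at the level of coefficients. Realising $M_{j,n}$ instead as the $j$-th derivative at the origin of the explicit entire function $\phi_n$ and exploiting its $n$-free control on a fixed disc is precisely what converts the problem into a Fubini/dominated-convergence step that is legitimate uniformly in $n$.
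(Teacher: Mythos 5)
The paper gives no proof of Proposition \ref{sequences}: it is quoted as a ``simple extension'' of results from \cite{acsst1,acsst6}, where such limits are obtained with the functional-analytic apparatus that also appears in Step 2 of the proof of Theorem \ref{TH23} (continuity of infinite-order differential operators on spaces of entire functions with growth constraints). Your route --- the generating function $\phi_n(s)=[\cosh(s/n)+a\sinh(s/n)]^n$, Cauchy estimates for the moments $M_{j,n}=\phi_n^{(j)}(0)$, and a Tannery/dominated-convergence exchange of $\lim_{n}$ with the $l$-series --- is therefore a genuinely different, elementary and self-contained argument, and its overall structure is sound: part (a) is exactly $F_n(x)=\phi_n(ip\cdot x/\hbar)$ together with $\phi_n\to e^{as}$ uniformly on compact subsets of $\C$, and parts (b), (c) do reduce to $\sum_{l\ge 0}h^l M_{ql,n}/l!$ with $|h|$ bounded on compacta, so that $M_{ql,n}\to a^{ql}$ plus an $n$-uniform, $l$-summable majorant yields $e^{ha^q}$, which after rewriting $ha^q$ is the stated limit in both cases.

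The one step you must tighten is the $n$-uniform bound $|\phi_n(s)|\le K_R$ on $|s|\le R$, because your Cauchy estimate is optimized at $R\sim j/a$, i.e.\ at radii growing with $j$: for the Stirling step to give $|M_{j,n}|\le C\sqrt{j}\,a^{j}$ you need $K_R\le e^{O(R)}$ with constants independent of $n$ and of $R$. The expansion $\cosh(s/n)+a\sinh(s/n)=1+as/n+O(R^2/n^2)$ as written only yields $K_R\le e^{aR+O(R^2)}$ (the worst case being small $n$), and an extra factor of size $e^{cj^2}$ in the moment bound would destroy the summability of $\sum_l \sqrt{ql}\,(|h|a^q)^l/l!$ precisely when $q\ge 2$, i.e.\ exactly where you need it. The fix is one line: for $a>1$ and $|s|\le R$ one has $|\cosh(s/n)+a\sinh(s/n)|\le\cosh(R/n)+a\sinh(R/n)\le e^{aR/n}$, hence $K_R=e^{aR}$ for every $n$, and then $|M_{j,n}|\le j!\,e^{aR}R^{-j}$ with $R=j/a$ gives the claimed bound. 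With that correction the argument goes through, including the asserted local uniformity in $x$ (split the $l$-series at a large $L$ using the majorant, and pass to the limit termwise below $L$).
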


\begin{theorem}
The solution of the Cauchy problem
\begin{align*}
  &\left[i\hbar\frac{\partial}{\partial t}+\frac{\hbar^2}{2m}\nabla_x^2-\frac{1}{2}m\omega^2\norm{x}^2+f(t)\cdot x\right]\psi_n(t,x)=0,\quad \psi_n(0,x)=Y_n(x)
\end{align*}
is
\begin{align*}
  \psi_n(t,x)&=(\cos{\omega t})^{-d/2}\exp{\biggl\{\frac{im\omega}{2\hbar\sin{\omega t}\cos{\omega t}}\Bigl[-\norm{x}^2\sin^2{\omega t}+2x\cdot I(t,0)\cos{\omega t}}
  \\
  &{-2J(t,0)\cos{\omega t}+2x\cdot I(0,t)-\norm{I(0,t)}^2\Bigr]\biggr\}}
  \\
  &
  \times \sum_{l=0}^\infty\frac{1}{l!}\left(\frac{i\hbar}{2m\omega}\sin{\omega t}\cos{\omega t}\right)^l(\nabla_x^2)^l Y_n\left(\frac{x-I(0,t)}{\cos{\omega t}}\right).
\end{align*}
Moreover, if we set $\psi(t,x)=\lim_{n\to \infty}\psi_n(t,x)$, then
\begin{align*}
  \psi(t,x)=(\cos{\omega t})^{-d/2}\exp{\biggl\{\frac{im\omega}{2\hbar\sin{\omega t}\cos{\omega t}}\Bigl[-\norm{x}^2\sin^2{\omega t}+2x\cdot I(t,0)\cos{\omega t}}\\
    -2J(t,0)\cos{\omega t}+2x\cdot I(0,t)-\norm{I(0,t)}^2\Bigr]-\frac{i(-ia)^{2q}\norm{p}^2\tan{\omega t}}{2m\hbar\omega}+\frac{i(-ia)^q p\cdot x}{\hbar\cos{\omega t}}\biggr\}.
\end{align*}
\end{theorem}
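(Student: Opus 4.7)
The plan is to mirror the proof of Theorem~\ref{TH23} almost verbatim, with $Y_n$ in place of the original superoscillating sequence. First I would observe that $Y_n(x)$ is a finite linear combination of plane waves $e^{ia'_k p\cdot x/\hbar}$ with real exponents $a'_k:=(-i)^q(1-2k/n)^q\in\R$; the hypothesis that $q$ is even is essential here, cf.\ Proposition~\ref{sequences}(b). By linearity of the Schr\"odinger equation and Proposition~\ref{PropPsiA} applied term-by-term (with the parameter $a$ there replaced by each $a'_k$), the function $\psi_n(t,x)$ decomposes as a $k$-independent prefactor---precisely the exponential appearing in the statement---multiplied by $\sum_{k=0}^n C_k(n,a)\exp[ia'_k p\cdot(x-I(0,t))/(\hbar\cos\omega t)-i(a'_k)^2\norm{p}^2\tan\omega t/(2\hbar m\omega)]$. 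The algebraic manipulation used in Step~1 of the proof of Theorem~\ref{TH23} then rewrites the second exponential as a power series in $(\nabla_x^2)^l$ acting on the first, and summing over $k$ collapses the result into $U(t,\nabla_x)\,Y_n((x-I(0,t))/\cos\omega t)$, yielding the claimed formula for $\psi_n$.

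For the limit $n\to\infty$ I would reuse the functional-analytic machinery already developed. Since $Y_n$ is a finite combination of plane waves with purely imaginary exponents, it extends to an entire function on $\C^d$ of order at most~$1$ and finite exponential type, hence belongs to the space $A_{2,0}$ on which $U(t,\nabla_x)$ was shown to act continuously (Step~2 of the proof of Theorem~\ref{TH23}). Proposition~\ref{sequences}(b) gives $Y_n\to Y$, with $Y(x)=e^{ip\cdot x(-ia)^q/\hbar}$, uniformly on compact subsets; this allows one to exchange the limit with $U(t,\nabla_x)$ and write $\psi(t,x)$ as the prefactor times $U(t,\nabla_x)\,Y((x-I(0,t))/\cos\omega t)$.

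Finally, because $Y$ is a single plane wave, $U$ acts on it by multiplication by its symbol. Setting $\beta:=(-ia)^q p/(\hbar\cos\omega t)$, the eigenvalue identity $\nabla_x^2 e^{i\beta\cdot x}=-\norm{\beta}^2 e^{i\beta\cdot x}$ together with the geometric sum in $l$ yields $U(t,\nabla_x)e^{i\beta\cdot(x-I(0,t))}=\exp[-i\hbar\norm{\beta}^2\sin\omega t\cos\omega t/(2m\omega)]\,e^{i\beta\cdot(x-I(0,t))}$, and the identity $\norm{\beta}^2=(-ia)^{2q}\norm{p}^2/(\hbar^2\cos^2\omega t)$ converts the scalar prefactor into $\exp[-i(-ia)^{2q}\norm{p}^2\tan\omega t/(2m\hbar\omega)]$, exactly the correction announced. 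The remaining shift by $I(0,t)$ contributes a further exponential that is absorbed into the quadratic bracket of the prefactor, producing the stated formula for $\psi(t,x)$.

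\emph{The hard part.} All of the serious analytic input---continuity of $U(t,\nabla_x)$ on $A_{2,0}$ and uniform-on-compacta convergence of $Y_n$ to $Y$---is already available from Theorem~\ref{TH23} and Proposition~\ref{sequences}. The only genuinely delicate task is the bookkeeping in the first paragraph: after expanding the squared norm $\norm{x-p a'_k\sin\omega t/(m\omega)-I(0,t)}^2$ supplied by Proposition~\ref{PropPsiA}, one must check that the $k$-independent pieces assemble exactly into the prefactor in the statement while the $k$-dependent remainder has precisely the form required for the Taylor-series trick of Theorem~\ref{TH23}. This is routine but easy to botch in signs, and is where most of the work of a complete write-up will lie.
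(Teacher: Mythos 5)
Your route is exactly the one the paper intends: the paper gives no separate proof of this theorem, remarking only that it follows by "slightly modifying" the proof of Theorem~\ref{TH23}, and your write-up (term-by-term application of Proposition~\ref{PropPsiA} with $a$ replaced by the real numbers $a'_k=(-i)^q(1-2k/n)^q$, the Taylor-series rewriting into $U(t,\nabla_x)Y_n((x-I(0,t))/\cos\omega t)$, continuity of $U$ on $A_{2,0}$, and Proposition~\ref{sequences}(b) to pass to the limit) is precisely that modification. The bookkeeping you flag as the delicate part does work out: expanding the squared norm in Proposition~\ref{PropPsiA} shows the $k$-independent pieces assemble into the stated prefactor and the $k$-dependent remainder is $\exp\bigl[\frac{ia'_k\,p\cdot(x-I(0,t))}{\hbar\cos\omega t}-\frac{i(a'_k)^2\norm{p}^2\tan\omega t}{2\hbar m\omega}\bigr]$, as you assert.

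The one incorrect step is your last sentence. The limit computation produces the factor $\exp\bigl[\frac{i(-ia)^q\,p\cdot(x-I(0,t))}{\hbar\cos\omega t}\bigr]$, and the piece $-\frac{i(-ia)^q\,p\cdot I(0,t)}{\hbar\cos\omega t}$ cannot be "absorbed into the quadratic bracket of the prefactor": that bracket is $p$-independent and is left unchanged in the stated formula, so nothing there can soak up a term linear in $p$. What your (correct) computation actually shows is that the final exponent should contain $p\cdot(x-I(0,t))$ where the displayed statement has $p\cdot x$ --- consistent with \eqref{cauchy_solution_limit} in Theorem~\ref{TH23}, where the argument of the shifted plane wave is likewise $x-I(0,t)$; the displayed formula in the theorem appears to have dropped the $I(0,t)$ shift. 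In a complete write-up you should keep the term $-\frac{i(-ia)^q\,p\cdot I(0,t)}{\hbar\cos\omega t}$ explicitly (or note the typo) rather than claim it disappears into the prefactor.
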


\begin{remark}{\rm
A similar result holds also for the sequence $Z_n(x)$ given in item (c) in Proposition \ref{sequences}.}
\end{remark}

\section{Persistence of superoscillations}
In this section we introduce another large class of superoscillating functions and show that for this class the superoscillatory behavior persists in time, and in fact cannot arise if the initial datum is not superoscillating.

\begin{theorem}
  Consider a sequence of functions $\{\psi_{n_1,\dotsc,n_d}(t,x)\}_{n_1,\dotsc,n_d=1}^\infty$ of the form
\begin{align}
  \psi_{n_1,\dotsc,n_d}(t,x)=\sum_{k_1=-n_1}^{n_1}\dotsb\sum_{k_d=-n_d}^{n_d} c_{n_1,\dotsc,n_d,k_1,\dotsc,k_d}(t)\exp{\left(\sum_{j=1}^d\frac{ik_j p_j x_j}{n_j\hbar}\right)},
  \label{psi_n}
\end{align}
with some coefficients $c_{n_1,\dotsc,n_d,k_1,\dotsc,k_d}(t)$.
Suppose that each $\psi_{n_1,\dotsc,n_d}(t,x)$ satisfies the time-dependent Schr\"{o}dinger equation
\begin{align}
  \left[i\hbar\frac{\partial}{\partial t}+\frac{\hbar^2}{2m}\nabla_x^2-W_{n_1,\dotsc,n_d}(t,x)\right]\psi_{n_1,\dotsc,n_d}(t,x)=0
  \label{schrodinger}
\end{align}
for a certain potential energy $W_{n_1,\dotsc,n_d}(t,x)$. Then the sequence $\{\psi_{n_1,\dotsc,n_d}(t,x)\}_{n_1,\dotsc,n_d=1}^\infty$is superoscillatory at any given time if and only if it is superoscillatory at any other time.
  \label{}
\end{theorem}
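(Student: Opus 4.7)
The plan is to reduce the claim to a finite-dimensional, $n$-wise invertible evolution of the Fourier coefficients $c_{n_1,\dotsc,n_d,k_1,\dotsc,k_d}(t)$, and then argue that this bijective transport is compatible with the asymptotics in $(n_1,\dotsc,n_d)$ that define superoscillation. First, I observe that the form \eqref{psi_n} forces $\psi_{n_1,\dotsc,n_d}(t,\cdot)$ to be, at every $t$, a trigonometric polynomial whose spatial frequencies lie in the $t$-independent box $\prod_{j=1}^d[-|p_j|/\hbar,\,|p_j|/\hbar]$, so ``superoscillatory at time $t$'' means precisely that the sequence converges uniformly on compact subsets of $\R^d$ to a function whose Fourier content sits outside that box, and the whole question is whether this feature of the limit can depend on $t$.

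Next, I would substitute the Ansatz \eqref{psi_n} into \eqref{schrodinger}, using that $\nabla_x^2$ acts diagonally on each plane wave while multiplication by $W_{n_1,\dotsc,n_d}(t,x)$ couples the modes only through the Fourier coefficients of the potential on the same frequency lattice. Matching coefficients of each exponential on both sides of the equation — which is legitimate precisely because the form \eqref{psi_n} is assumed to hold at every $t$ — converts the Schr\"odinger equation into a finite linear ODE system for the vector $c(t)=(c_{n_1,\dotsc,n_d,k_1,\dotsc,k_d}(t))_{|k_j|\le n_j}$. By existence and uniqueness for linear ODEs, the propagator $c(t_1)\mapsto c(t_2)$ is, for each fixed multi-index, a linear isomorphism of the corresponding finite-dimensional coefficient space; equivalently, the $c(t)$ are recovered from $\psi_{n_1,\dotsc,n_d}(t,\cdot)$ by Fourier inversion over a common spatial period, and this inversion is a bijection at every $t$.

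Combining the two steps, the asymptotic behavior of $\{\psi_{n_1,\dotsc,n_d}(t,\cdot)\}$ on compact sets as $(n_1,\dotsc,n_d)\to\infty$ is encoded in the coefficient families $\{c(t)\}$, which are transported bijectively from $t_1$ to $t_2$ at each fixed multi-index; superoscillation at $t_1$ therefore forces the analogous phenomenon at $t_2$, and the converse follows by swapping the roles of $t_1$ and $t_2$. The hard part will be upgrading this $n$-wise bijectivity to uniform-in-$n$ control so that uniform convergence on compact sets is genuinely preserved as $(n_1,\dotsc,n_d)\to\infty$, since the potentials $W_{n_1,\dotsc,n_d}$ are allowed to depend on the multi-index. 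The natural tool should be a Paley--Wiener type estimate: each $\psi_{n_1,\dotsc,n_d}(t,\cdot)$ extends to an entire function of $x$ of exponential type at most $\|p\|/\hbar$, so a superoscillating limit is characterized by a strictly larger exponential type, in the same spirit as the continuity of $U(t,\nabla_x)$ on the space $A_{2,0}$ exploited in the proof of Theorem \ref{TH23}.
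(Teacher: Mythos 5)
Your outline stalls at exactly the point where the paper's proof does its real work, and it also misses the structural fact that makes that work possible. When you substitute \eqref{psi_n} into \eqref{schrodinger} you let $W_{n_1,\dotsc,n_d}(t,x)$ couple the modes ``through the Fourier coefficients of the potential on the same frequency lattice''; this is an extra assumption not in the theorem, and it is also not what the hypothesis gives. Since the same ansatz \eqref{psi_n}, with $t$-independent frequencies $k_jp_j/(n_j\hbar)$, is required to solve the equation for all $t$ and all $x$ in the period box, matching coefficients forces the bracketed expression to vanish identically, and because $t$ is arbitrary this forces $W_{n_1,\dotsc,n_d}(t,x)$ to be independent of $x$, i.e.\ $W_{n_1,\dotsc,n_d}(t,x)=V_{n_1,\dotsc,n_d}(t)$. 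That is not a technical convenience: it makes the coefficient evolution an explicit \emph{diagonal} phase, $c_{k}(t)=c_{k}(t')\exp\bigl[-\tfrac{i(t-t')}{2m\hbar}\sum_{j}k_j^2p_j^2/n_j^2-\tfrac{i}{\hbar}\int_{t'}^{t}V_{n_1,\dotsc,n_d}(s)\,ds\bigr]$, rather than the coupled linear ODE system you envisage, whose large-$(n_1,\dotsc,n_d)$ behavior you would have no way to control.

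More seriously, your central step --- that $n$-wise bijective transport of the coefficient vectors ``therefore forces'' superoscillation at $t_2$ given superoscillation at $t_1$ --- is precisely the statement to be proved, and bijectivity alone does not yield it: superoscillation is a property of the limit as $(n_1,\dotsc,n_d)\to\infty$, and a family of invertible maps at each finite multi-index can create or destroy limiting behavior in the absence of uniform control. You acknowledge this (``the hard part'') but do not supply the missing argument, and the Paley--Wiener remark is only a pointer. The paper closes the gap constructively: using the diagonal phase evolution it writes $\psi_{n_1,\dotsc,n_d}(t,x)$ as an explicit integral of $\psi_{n_1,\dotsc,n_d}(t',\cdot)$ over the period box $\Omega$, passes to the limit to obtain the free-particle propagator, multiplied by the phase $\exp\bigl[-\tfrac{i}{\hbar}\int_{t'}^{t}V(s)\,ds\bigr]$, connecting $\psi(t,\cdot)$ to $\psi(t',\cdot)$, and then encodes superoscillation through periodicity: if $\psi(t',\cdot)$ is periodic in $x$ with a period $X$ having $X_j<2\pi\hbar/p_j$ (shorter than any period the band-limited approximants can produce), the change of variables $x'\mapsto x'+X$ in the propagator integral shows that $\psi(t,\cdot)$ has the same period, and exchanging the roles of $t$ and $t'$ gives the converse. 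To complete your proposal you would need either this explicit computation or a genuine uniform-in-$(n_1,\dotsc,n_d)$ estimate; as written, the argument has a gap exactly where the theorem's content lies.
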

\begin{proof}
Substituting \eqref{psi_n} into \eqref{schrodinger}, we find
\begin{align}
  \sum_{k=-n}^n\biggl[i\hbar\frac{\partial c_{n_1,\dotsc,n_d,k_1,\dotsc,k_d}(t)}{\partial t}-\frac{1}{2m}\sum_{j=1}^d\frac{k_j^2 p_j^2}{n_j^2}c_{n_1,\dotsc,n_d,k_1,\dotsc,k_d}(t)\nn\\-W_{n_1,\dotsc,n_d}(t,x)c_{n_1,\dotsc,n_d,k_1,\dotsc,k_d}(t)\biggr]\exp{\left(\sum_{j=1}^d\frac{ik_j p_j x_j}{n_j\hbar}\right)}=0.
  \label{psi_n_expansion}
\end{align}
Although the set of functions $\bigl\{\exp{\bigl[\sum_{j=1}^d ik_j p_j x_j/(n_j\hbar)\bigr]}\bigr\}$ with $-n_j\le k_j\le n_j$, $1\le j\le d$ is not complete on
\begin{align*}
  \Omega=[-n\pi\hbar/p_1,n\pi\hbar/p_1]\times\dotsb\times[-n\pi\hbar/p_d,n\pi\hbar/p_d]
  \label{}
\end{align*}
for any finite $n_1,\dotsc,n_d$, it is nevertheless true that equation \eqref{psi_n_expansion} holds for all $x\in\Omega$ if and only if the expression in the square brackets is identically zero.
Furthermore, since $t$ is arbitrary in this expression, this implies that $W_{n_1,\dotsc,n_d}(t,x)$ does not depend on $x$.
After setting $W_{n_1,\dotsc,n_d}(t,x)=V_{n_1,\dotsc,n_d}(t)$, the resulting differential equation has the solution
\begin{align*}
    c_{n_1,\dotsc,n_d,k_1,\dotsc,k_d}(t)=c_{n_1,\dotsc,n_d,k_1,\dotsc,k_d}(t')\exp{\biggl[-\frac{i(t-t')}{2m\hbar}\sum_{j=1}^d\frac{k_j^2 p_j^2}{n_j^2}-\frac{i}{\hbar}\int_{t'}^t V_{n_1,\dotsc,n_d}(s)ds\biggr]},
  \label{}
\end{align*}
which leads to
\begin{align}
  \psi_{n_1,\dotsc,n_d}(t,x)=\exp{\biggl[-\frac{i}{\hbar}\int_{t'}^t V_{n_1,\dotsc,n_d}(s)ds\biggr]}\sum_{k_1=-n_1}^{n_1}\dotsb\sum_{k_d=-n_d}^{n_d} c_{n_1,\dotsc,n_d,k_1,\dotsc,k_d}(t')\nn\\ \times\exp{\biggl[-\frac{i(t-t')}{2m\hbar}\sum_{j=1}^d\frac{k_j^2 p_j^2}{n_j^2}+\frac{i}{\hbar}\sum_{j=1}^d\frac{k_j p_j x_j}{n_j}\biggr]}.
  \label{psi_n_sol}
\end{align}
Setting $t=t'$ in \eqref{psi_n_sol}, multiplying the result by $\exp{\bigl[-\sum_{j=1}^d il_j p_j x_j/(n_j\hbar)\bigr]}$, where $-n_j\le l_j\le n_j$, $1\le j\le d$, and integrating over $x\in\Omega$, we find
\begin{align*}
    c_{n_1,\dotsc,n_d,l_1,\dotsc,l_d}(t')=\frac{\prod_{j=1}^d p_j}{(2\pi \hbar)^d\prod_{j=1}^d n_j}\int_\Omega\psi_{n_1,\dotsc,n_d}(t',x)\exp{\biggl[-\frac{i}{\hbar}\sum_{j=1}^d\frac{l_j p_j x_j}{n_j}\biggr]}dx,
  \label{}
\end{align*}
which leads to
\begin{align}
  \psi_{n_1,\dotsc,n_d}(t,x)=\frac{\prod_{j=1}^d p_j}{(2\pi \hbar)^d\prod_{j=1}^d n_j}\exp{\left[-\frac{i}{\hbar}\int_{t'}^t V_{n_1,\dotsc,n_d}(s)ds\right]}\int_\Omega\psi_{n_1,\dotsc,n_d}(t',x')\nn\\ \times\sum_{k_1=-n_1}^{n_1}\dotsb\sum_{k_d=-n_d}^{n_d}\exp{\biggl[-\frac{i(t-t')}{2m\hbar}\sum_{j=1}^d\frac{k_j^2 p_j^2}{n_j^2}+\frac{i}{\hbar}\sum_{j=1}^d\frac{k_j p_j(x_j-x'_j)}{n_j}\biggr]} dx'.
  \label{psi_n_solution}
\end{align}

Setting
\begin{align*}
  &\psi(t,x)=\lim_{n_1\to\infty}\dotsb\lim_{n_d\to\infty}\psi_{n_1,\dotsc,n_d}(t,x),\\
  &V(t)=\lim_{n_1\to\infty}\dotsb\lim_{n_d\to\infty}V_{n_1,\dotsc,n_d}(t)
\end{align*}
and taking the limit $n_1\to\infty,\dotsc,n_d\to\infty$ in \eqref{psi_n_solution}, we find
\begin{align*}
    \psi(t,x)=\frac{\prod_{j=1}^d p_j}{(2\pi\hbar)^d}\exp{\left[-\frac{i}{\hbar}\int_{t'}^t V(s)ds\right]}\int_{\R^d}\psi(t',x')\nn\\ \times\Biggl\{\prod_{j=1}^d\lim_{n_j\to\infty}\frac{1}{n_j}\int_{\R} \exp{\left[-\frac{i(t-t')k_j^2 p_j^2}{2m\hbar n_j^2}+\frac{ik_j p_j(x_j-x'_j)}{n_j\hbar}\right]dk_j}\Biggr\}dx',
  \label{}
\end{align*}
which finally leads to
\begin{align}
  \psi(t,x)=\left[\frac{m}{2\pi i\hbar(t-t')}\right]^{d/2}\exp{\left[-\frac{i}{\hbar}\int_{t'}^t V(s)ds\right]}\int_{\R^d}\psi(t',x')\exp{\left[\frac{im\norm{x-x'}^2}{2\hbar(t-t')}\right]}dx'.
  \label{psi_solution}
\end{align}

Suppose that, for a fixed $t'$, the function $\psi(t',x)$ is periodic in $x$ with the period
\begin{align*}
  X=(0,\dotsc,0,X_j,0,\dotsc,0),
  \label{}
\end{align*}
where $X_j\not=0$ is in the $j$th position for some $1\le j\le d$, so that
\begin{align} \begin{split}
  \psi(t',x+X)=\psi(t',x).
  \label{psi_periodicity}
\end{split}  \end{align}
It is easy to prove that, at any other time $t$, the function $\psi(t,x)$ is also periodic in $x$ with the same period $X$.
Indeed, from \eqref{psi_solution},
\begin{align*}
    \psi(t,x+X)=\left[\frac{m}{2\pi i\hbar(t-t')}\right]^{d/2}\exp{\left[-\frac{i}{\hbar}\int_{t'}^t V(s)ds\right]}\nn\\ \times\int_{\R^d}\psi(t',x')\exp{\left[\frac{im\norm{x+X-x'}^2}{2\hbar(t-t')}\right]}dx',
  \label{}
\end{align*}
which, after the change of variable $x'\mapsto x'+X$ and use of  \eqref{psi_periodicity}, gives
\begin{align*}
  \psi(t,x+X)=\psi(t,x).
  \label{}
\end{align*}
Applying this result to the case $X_j<2\pi\hbar/p_j$ for any $1\le j\le d$, we obtain the statement.
\end{proof}

\section{Conclusions}

We have studied the evolution of superoscillating initial data for the quantum driven harmonic oscillator.
Our main result shows that superoscillations are amplified by the harmonic potential and that the analytic solution develops a singularity in finite time.
Moreover, even for non-superoscillatory initial data, the harmonic oscillator displays a superoscillatory behavior since the solution contains the term that increases arbitrarily with time (up to the time of singularity).
This phenomenon does not occur for the free particle.

Since the leading semiclassical behavior of the propagator for any quantum-mechanical system is given by the propagator for the driven harmonic oscillator, our results directly apply to a much broader problem of establishing presence of superoscillations in any quantum system (at least in the semiclassical limit).

We have also shown that for a large class of solutions of the Schr\"odinger equation, superoscillating behavior at any given time implies superoscillating behavior at any other time.

\end{document}